\documentclass[11pt,letterpaper]{article}

\usepackage[margin=1in]{geometry} 
\usepackage{amsmath,amsthm,amssymb}
\usepackage{thm-restate}
\usepackage[utf8]{inputenc}
\usepackage{color} 
\usepackage[numbers]{natbib}
\usepackage{graphicx}
\usepackage{wrapfig} 
\usepackage{float} 
\usepackage[inline]{enumitem} 
\usepackage[bookmarks=false]{hyperref}
\usepackage{cleveref}

\usepackage{algpseudocode,algorithm,algorithmicx}

\newcommand*{\rev}[1]{{#1}}  

\newcommand*{\whp}{w.h.p.\ }
\newcommand*\FnName[1]{\texttt{#1}}
\newcommand*{\tO}{\widetilde{O}}
\newcommand*{\tOmega}{\widetilde{\Omega}}
\newcommand*{\tTheta}{\widetilde{\Theta}}

\newcommand{\mc}{\mathcal}
\newcommand{\eps}{\varepsilon}
\newcommand{\poly}{\mathrm{poly}}

\newtheorem{lemma}{Lemma}
\newtheorem*{lemma*}{Lemma}
\newtheorem*{theorem*}{Theorem}
\newtheorem{corollary}{Corollary}
\newtheorem*{corollary*}{Corollary}
\newtheorem{definition}{Definition}
\newtheorem{theorem}{Theorem}
\newtheorem{goal}{Goal}

\newcommand{\FullOrShort}{full}
\ifthenelse{\equal{\FullOrShort}{full}}{
  \newcommand{\fullOnly}[1]{#1}
  \newcommand{\shortOnly}[1]{}
}{

  \newcommand{\fullOnly}[1]{}
  \newcommand{\shortOnly}[1]{#1}
}

\begin{document}
        \title{Low-Congestion Shortcuts without Embedding\footnotemark[1]}
        \author{Bernhard Haeupler\footnotemark[2], Taisuke Izumi\footnotemark[3], Goran Zuzic\footnotemark[2]} 
        \date{}
        \maketitle
        
        \renewcommand{\thefootnote}{\fnsymbol{footnote}}

        \footnotetext[1]{This is a post-peer-review, pre-copyedit version of an article published in Distributed Computing (DIST). The final authenticated version is available online at: http://dx.doi.org/10.1007/s00446-020-00383-2. This work was supported in part by KAKENHI No. 15H00852 and 16H02878 as well as NSF grants CCF-1527110, CCF-1618280, CCF-1814603, CCF-1910588, NSF CAREER award CCF-1750808, a Sloan Research Fellowship and the 2018 DFINITY Scholarship.}

        \footnotetext[2]{Carnegie Mellon University, Pittsburgh PA, USA. E-mail: \{haeupler,gzuzic\}@cs.cmu.edu.}

        \footnotetext[3]{Nagoya Institute of Technology, Gokiso-cho, Showa-ku, Nagoya, Aichi, Japan. E-mail: t-izumi@nitech.ac.jp.}

        \renewcommand{\thefootnote}{\arabic{footnote}}
        \setcounter{footnote}{0}

\begin{abstract}
Distributed optimization algorithms are frequently faced with solving sub-problems on disjoint connected parts of a network. Unfortunately, the diameter of these parts can be significantly larger than the diameter of the underlying network, leading to slow running times. This phenomenon can be seen as the broad underlying reason for the pervasive $\tilde{\Omega}(\sqrt{n} + D)$ lower bounds that apply to most optimization problems in the CONGEST model. On the positive side, [Ghaffari and Hauepler; SODA'16] introduced low-congestion shortcuts as an elegant solution to circumvent this problem in certain topologies of interest. Particularly, they showed that there exist good shortcuts for any planar network and more generally any bounded genus network. This directly leads to fast $O(D \log^{O(1)} n)$ distributed algorithms for MST and Min-Cut approximation, given that one can efficiently construct these shortcuts in a distributed manner.

Unfortunately, the shortcut construction of [Ghaffari and Hauepler; SODA'16] relies heavily on having access to a genus embedding of the network. Computing such an embedding distributedly, however, is a hard problem---even for planar networks. No distributed embedding algorithm for bounded genus graphs is in sight.

In this work, we side-step this problem by defining \rev{tree-restricted shortcuts:} a more structured and restricted form of shortcuts. We give a novel construction algorithm that efficiently finds such shortcuts that are, up to a logarithmic factor, as good as the best \rev{restricted} shortcuts that exist for a given network. This new construction algorithm directly leads to an $O(D \log^{O(1)} n)$-round algorithm for solving optimization problems like MST for any topology (i.e., underlying graph) for which good \rev{restricted} shortcuts exist---without the need to compute any embedding. \rev{This greatly simplifies the existing planar algorithms and includes the first efficient algorithm for bounded genus graphs.}
\end{abstract}

\newpage

\section{Introduction}

\subsection{Background and motivation}

Consider the problem of finding the Minimum Spanning Tree (MST) on a distributed network with $n$ independent processing nodes.
The network is abstracted as a graph $G = (V, E_G)$ with $n$ nodes and diameter $D$.
The nodes communicate by synchronously passing $O(\log n)$-bit messages to each of its direct neighbors.
The goal is to design algorithms (protocols) that minimize the number of synchronous message passing rounds before the nodes collaboratively solve the optimization problem.

The message-passing setting we just described is a model called CONGEST~\cite{Peleg:2000}.
The MST problem can be solved in such a setting using $O(\sqrt{n}\log^* n + D)$ rounds of communication~\cite{Kutten-Peleg}\footnote{\rev{The algorithm can be easily modified to run in $O(\sqrt{n}\log^* n + D)$ rounds of communication by growing components to size $\sqrt{n / \log^* n}$ in the first phase of the algorithm.}}.
Moreover, and perhaps more surprisingly, this bound was shown to be the best possible (up to polylogarithmic factors).
Specifically, there are graphs in which one cannot do any better than $\tOmega(\sqrt{n} + D)$~\cite{Peleg-Rubinovich-1999, elkin2006unconditional, DasSarma-11}\footnote{Throughout this paper, $\tO(\cdot)$, $\tTheta(\cdot)$, and $\tOmega(\cdot)$ hide polylogarithmic factors in $n$, the number of nodes in the network.}.
While clearly no algorithm can solve any global network optimization problem faster than $\Omega(D)$, the $\tOmega(\sqrt{n})$ factor is harder to discern.
To make matters worse, the $\tOmega(\sqrt{n} + D)$ lower bound was shown to be far-reaching.
It applies to a multitude of important network optimization problems including MST, minimum-cut, weighted shortest-path, connectivity verification and so on~\cite{DasSarma-11}

While this bound precludes the existence of more efficient algorithms in the general case, it was not clear whether it holds for special families of graphs.
This question is especially important because any real-world application on huge networks should exploit the special structure that the network provides.
The mere existence of ``hard'' networks for which one cannot design any fast algorithm might not be a limiting factor.

In the first result that utilizes network topology (i.e., the structure of the communication graph) to circumvent the lower bound, Haeupler and Ghaffari designed an $\tO(D)$-round distributed MST algorithm for planar graphs~\cite{gh2016lowcongestion}. Note that this algorithm offers a huge advantage over older results for planar graphs with small diameters.

They achieve this by introducing an elegant abstraction for designing distributed algorithms named \textbf{low-congestion shortcuts}.
Their methods could in principle be used to achieve a similar result for genus-bounded graphs, but their presented algorithms have a major technical obstacle: they require a surface embedding of the planar/genus bounded graph to construct the low-congestion shortcuts. While computing a distributed embedding for planar graphs has a complex $\tO(D)$-round solution~\cite{GH15-Embedding}, this remains an open problem for genus-bounded graphs~\cite{gh2016lowcongestion}.

This paper side-steps the issue by vastly simplifying the construction of low-congestion shortcuts.
We define a more structured version of low-congestion shortcuts called \textbf{tree-restricted shortcuts} and propose a simple and general distributed algorithm for finding them.
\rev{On many graphs of interest these shortcuts are as powerful as the general ones (see the discussion in \Cref{sec:subsequent-work} for a short comparison).}
Moreover, the algorithm is completely oblivious to any intricacies of the underlying topology and finds universally near-optimal tree-restricted shortcuts.
As a simple consequence of our construction technique, we get an $\tO(gD)$-round algorithm for genus $g$ graphs, \rev{a result that was not known before the conference version of this paper was published}.
We believe that this simplicity makes the algorithm usable even in practice.

\subsection{A brief overview of low-congestion shortcuts}
We now give a short introduction to the general low-congestion shortcut framework, as defined in \cite{gh2016lowcongestion}.
Consider the following recurring scenario throughout many distributed optimization problems:
\begin{definition}[Part-wise aggregation]
  \textit{Let $G = (V, E_G)$ be a graph. Given disjoint and internally-connected \textbf{parts} $P_1, P_2, \ldots, P_N \subseteq V$, we want to distributedly compute some simple part-wise aggregate (e.g., sum or max) of nodes' private values. Specifically, each node is initially given its part ID (or $\bot$ if none) and a private value $x_v$; at the end of the computation each node $v$ belonging to some part $P_i$ should know the aggregate value of $\{ x_v \mid v \in P_i \}$.}
\end{definition}

A classical example for such a scenario is the 1926 algorithm of Boruvka~\cite{Boruvka26} for computing the MST: We start with a trivial partition of singleton parts for each node. For $O(\log n)$ iterations each part computes the minimum-weighted outgoing edge, adds it to the MST, and merges with the other part incident to this edge.

A key concern in designing a distributed version of Boruvka's algorithm is finding good communication schemes that allow the nodes of some part to collaborate without interfering with other parts. While a natural solution would be to allow communication only inside the same part (which is feasible since the parts are internally connected), this could take a long time. The problem appears when the diameter of a part in isolation is much larger than the diameter $D$ of the original graph $G$.

\textbf{Low-congestion shortcuts}~\cite{gh2016lowcongestion} were introduced to overcome this issue: each part $P_i$ is allowed to use a set of extra edges $H_i \subseteq E_G$ to more efficiently communicate with other nodes in the same part. More precisely, part $P_i$ is permitted to use the edges $E_G[P_i] \cup H_i$ for communication, where $E_G[P_i]$ are edges with both endpoints in $P_i$.

We say that a shortcut has \textbf{dilation} $d$ if the diameter of $E_G[P_i] \cup H_i$ is at most $d$ for all parts. Similarly, it has \textbf{congestion} $c$ when each edge is assigned to at most $c$ different parts. We give the formal definitions below.
\begin{definition}
  Let $G = (V, E_G)$ be an undirected graph with vertices subdivided into \textbf{disjoint and connected} subsets $\mathcal{P} = (P_1, P_2, ..., P_N), P_i \subseteq V$. In other words, $E_G[P_i]$ is connected and $P_i \cap P_j = \emptyset$ for $i \neq j$.
  The subsets $P_i$ are called \textbf{parts}.
  We define a \textbf{shortcut} $\mathcal{H}$ as $(H_1, H_2, ..., H_N)$, $H_i \subseteq E_G$. A shortcut is characterized by the following parameters:
  \begin{enumerate}
  \item $\mathcal{H}$ has congestion $c$ if each edge $e \in E_G$ is used in at most $c$ different sets $E_G[P_i] \cup H_i$, i.e.,  $\forall e \in E_G: \ |\{i : e \in E_G[P_i] \cup H_i \}| \le c$. Note that the sets $\{ E_G[P_i] \}_{i=1}^N$ are disjoint.
  \item $\mathcal{H}$ has dilation $d$ if for each $i \in [N]$ the diameter of $E_G[P_i] \cup H_i$ is at most $d$.
  \end{enumerate}
\end{definition}

Finally, we define the \textbf{quality} $q$ of a shortcut as $\mathrm{congestion} + \mathrm{dilation}$, a classic parameter extensively used in routing~\cite{LMR94-routing}.

If we can efficiently construct shortcuts with quality $q$, we can solve problems such as MST and approximate Min-Cut in $\tO(q)$ rounds~\cite{gh2016lowcongestion}. One would ideally want $\tO(D)$-quality shortcuts since going below the diameter is clearly impossible for global problems such as the MST or Min-Cut, since otherwise two nodes at distance $D$ apart would not be able to exchange any information about themselves. However, the pervasive $\tOmega(\sqrt{n} + D)$ lower bound implies we cannot find shortcuts with $q = \mathrm{congestion} + \mathrm{dilation} = \tO(D)$ on general graphs, for many graph families shortcuts of quality $\tO(D)$ exist. For example, planar graphs always offer (optimal) $\tO(D)$-quality shortcuts, and such shortcuts can be found in $\tO(D)$ rounds, thus bypassing the $\tOmega(\sqrt{n} + D)$ lower bound.


\subsection{Our contribution}\label{sec:contrib}

Roughly speaking, there are two challenges in the design of shortcut-based algorithms. Let $\mathcal{G}$ be the target class of graphs we want to design distributed algorithms. The first challenge is to identify the optimal (smallest) value $q$ such that $\mathcal{G}$ has shortcuts of quality $q$. This is purely a graph-theoretic problem. The second challenge is to convert the existential result proved by the first challenge to the constructive result, i.e., we must design a distributed algorithm constructing efficient shortcuts for that class. This is a distributed computing problem that might be distinctively harder than the former one. Indeed, while one can prove that bounded genus graphs have good-quality shortcuts, the proof is not constructive because it requires access to an embedding~\cite{gh2016lowcongestion}; this is the primary reason why fast algorithms for bounded genus graphs were not known. Even in the planar case, distributedly constructing such an embedding is known, but complicated.

A natural idea to simplify algorithm design would be to come up with a generic procedure that finds a shortcut of quality $q$ for the best (or approximately best) $q$. Such a result would automatically lift a purely existential result to a constructive one. However, such a result is currently unknown and is the central (open) problem in the area of low-congestion shortcuts.
\begin{goal}\label{maingoal}
  Let $\mathcal{P} = (P_1, \ldots, P_N)$ be a set of parts in a graph $G$. Distributedly construct shortcuts of quality $\tO(q)$ in $\tO(q)$ rounds, where $q$ is the optimal shortcut quality (with respect to $\mathcal{P}$).
\end{goal}

We resolve the above question for some important classes of graphs. We introduce a more structured definition of shortcuts called \textbf{tree-restricted shortcuts} and give a constructive algorithm that finds the nearly optimal tree-restricted shortcuts in any graph that contains them. While the new shortcut definition is a strict subset of the old definition, we leverage them to design optimal $\tO(D)$ round distributed algorithms for many graphs of interest (e.g., all planar graphs and all bounded genus graphs).


\smallskip

The details of our contribution are summarized as follows:
\begin{itemize}
\item In \Cref{sec:treerestricted}, we introduce tree-restricted shortcuts, which can only use edges of some fixed spanning tree $T \subseteq G$. Such shortcuts are characterized by congestion $c$ and \textbf{block parameter} $b$ (which substitutes the classic dilation parameter). The block parameter is  more appropriate for tree-restricted shortcuts due to their highly-structured nature: in particular, the new parameter is stronger in the sense that it implies an upper bound of $O(bD)$ on the dilation. The block parameter (upper-)bounds the number of components of $P_i$, where two nodes are in different components if they cannot reach each other via $H_i$. In \Cref{subsec:routing-on-tree-restricted-shortcuts} we propose deterministic algorithms for broadcast, convergecast, and leader election (for all parts in parallel) utilizing tree-restricted shortcuts. These yield a $O(b(D+c))$ round solution to the part-wise aggregation problem (assuming constructed tree-restricted shortcuts), a solution simpler and often faster as compared to the general-case randomized algorithms from~\cite{gh2016lowcongestion}.
\item In \Cref{sec:algorithm}, we present a generic algorithm for constructing tree-restricted shortcuts. Given a spanning tree $T$, we can find near-optimal $T$-restricted shortcuts, as formalized in the following statement.
  \begin{theorem}
    \label{theorem:shortcuts-given-guarantee}
    Let $\mathcal{P} = (P_1, \ldots, P_N)$ be parts in the graph $G$ with a spanning tree $T \subseteq G$ such that there exists a $T$-restricted shortcut with congestion $c$ and block parameter $b$. There exists a randomized distributed CONGEST algorithm that finds a $T$-restricted shortcut with congestion $O(c \log N)$ and block parameter $3b$. The shortcut can be found in $\tO(b(D + c))$ rounds.
  \end{theorem}
  Notably, when a tree-restricted shortcut with parameters $b = \tO(1)$ and $c = \tO(D)$ exists, our construction yields $\tO(D)$-quality shortcuts (since dilation is at most $O(bD)$) and, by extension, (optimal) $\tO(D)$-round algorithms for MST and approximate Min-Cut.

  \textbf{Note: } The algorithm does not know the values of $b$ and $c$ upfront if one is willing to suffer a $\tilde{O}(1)$ performance hit. In particular, it is possible to find a feasible pair $(b, c)$ that yields a near-optimal value of $b(D+c)$. Given an arbitrary $Q > 0$, one can check if there exists a valid pair of parameters $(b, c)$ that yield a running time of at most $\tilde{O}(b(D+c)) \le Q$. This is done by trying all $O(\log n)$ possible powers-of-two $b$ that guarantee $\tilde{O}(bD) \le Q$ and $\tilde{O}(bc) \le Q$ and truncating the execution after $Q$ rounds. Given this procedure, one can search for the smallest power-of-two $Q$ for which the above procedure succeeds (by checking all $O(\log n)$ possibilities).

  %
%
%
%
\item The final question we tackle is what graph families admit good-quality tree-restricted shortcuts.
  Fortunately, one can reinterpret prior work in the novel terminology of tree-restricted shortcuts to conclude that (any $O(D)$-depth spanning tree of) genus-$g$ graphs contain tree-restricted shortcuts with congestion $O(gD\log D)$ and block parameter $O(\log D)$. In \Cref{sec:main-results-and-applications}, we can obtain a distributed algorithm that constructs a tree-restricted shortcut with congestion $O(gD\log D\log N)$ and block parameter $O(\log D)$ for graphs with genus at most $g$. For bounded genus graphs (i.e. $g = O(1)$), the algorithms based on our shortcut construction achieves near-optimal time complexity (up to a polylogarithmic factor).
\end{itemize}

\subsection{Subsequent work: a short survey}\label{sec:subsequent-work}

Significant progress has been made since the initial conference version of this paper was published~\cite{haeupler2016low}. Subsequent work has expanded on the utility of the framework by extending it to new graph classes, new problems, and provided better construction guarantees. We intend this section to serve as a short and convenient survey of the tree-restricted shortcut framework.

\paragraph{\rev{Tree-restricted} shortcut quality and construction.} For a spanning tree of depth $O(D)$, we define the \textbf{T-quality} (denoted $q_{T}$) of a $T$-restricted shortcut as $q_{T} := b D + c$ (where $b$ is the block parameter and $c$ is the congestion). This definition is simply the $\mathrm{congestion} + \mathrm{dilation}$, i.e. quality, when one upper-bounds the dilation as $O(bD)$ (see \Cref{sec:treerestricted} for a proof of this fact).

T-quality combines the congestion and the block parameter into a single value that sufficiently describes the shortcut construction and routing performance without the need to keep track of multiple parameters.
\begin{definition}
  A graph $G = (V, E_G)$ of diameter $D$ \textbf{admits} tree-restricted shortcuts of T-quality $q_T$ if for each spanning tree $T$ of depth $O(D)$ and each set of disjoint and connected parts $(P_i \subseteq V)_{i=1}^N$ there exists a $T$-restricted shortcut of congestion $c$ and block parameter $b$ satisfying $b\cdot D + c \le q_T$.
\end{definition}
It is not hard to see that if one can efficiently construct shortcuts of T-quality $q_T$, then a randomized algorithm can solve the part-wise aggregation problem in $\tilde{O}(q_T)$ rounds using standard random delay ideas~\cite{gh2016lowcongestion}. However, the key benefit of using the tree-restricted shortcut framework (as opposed to the general shortcut framework) is that near-optimal tree-restricted shortcuts can be efficiently and distributedly constructed.
\begin{theorem*}[Theorem 1.2 of \cite{haeupler2018round}]
  Suppose that a graph $G = (V, E_G)$ admits tree-restricted shortcuts of T-quality $q_T$. There exists a distributed CONGEST algorithm that finds a $T$-restricted shortcut with T-quality $\tilde{O}(q_T)$ in $\tilde{O}(q_T)$ rounds and sends at most $\tilde{O}(|E_G|)$ messages during its execution with high probability (with probability at least $1 - n^{-O(1)}$, where any constant can be chosen in the exponent). Moreover, the algorithm does not need to know the value of $q_T$ upfront.
\end{theorem*}
Note: We slightly reworded the main Theorem of \cite{haeupler2018round}. An appealing property of the tree-restricted shortcut framework (shared between this and subsequent work) is that one does not need to know the optimal tree-restricted shortcut T-quality $q_T^*$ upfront. This can often yield much better shortcuts than guaranteed by the theoretical bound, a property often desired in practical applications. While the paper typically assumes the algorithm knows the congestion $c$ and block parameter $b$, one can circumvent this issue with a simple exponential parameter search like the one described in \Cref{sec:contrib}.


\paragraph{Comparing \Cref{theorem:shortcuts-given-guarantee} and \cite{haeupler2018round}. } Notably, the construction of \cite{haeupler2018round} (unlike \Cref{theorem:shortcuts-given-guarantee}) controls the number of messages throughout the algorithm. Furthermore, it completes in $\tilde{O}(q_T) = \tilde{O}(bD + c)$ rounds, while the construction of \Cref{theorem:shortcuts-given-guarantee} takes $\tilde{O}(b(D + c))$ rounds. The latter result is significantly slower when $b = \log^{\omega(1)} n$, in e.g., genus- or treewidth-bounded graphs with super-polylogarithmic genus or treewidth (see \Cref{table:upper-lower-bounds-for-shortcuts} below). Furthermore, the results of \cite{haeupler2018round} can be made deterministic (with slightly worse guarantees, see below).

\paragraph{Deterministic construction.} Many of the aforementioned randomized results can be recovered in the deterministic setting while suffering only a small performance penalty. Notably, one can still construct near-optimal tree-restricted shortcuts and solve the part-wise aggregation problem in $\tilde{O}(b(D+c))$ rounds instead of $\tilde{O}(q_T) = \tilde{O}(bD + c)$ rounds (as guaranteed by the randomized procedure), even while controlling the message complexity.
\begin{theorem*}[Deterministic construction of \cite{haeupler2018round}]
  Suppose that a spanning tree $T$ of a graph $G = (V, E_G)$ admits tree-restricted shortcuts of congestion $c$ and block parameter $b$. There exists a deterministic distributed CONGEST algorithm that finds a $T$-restricted shortcut of congestion $\tilde{O}(c)$ and block parameter $\tilde{O}(b)$ in $\tilde{O}(b(D+c))$ rounds and $\tilde{O}(|E_G|)$ messages. Furthermore, one can solve the part-wise aggregation problem with the same guarantees.
\end{theorem*}

\paragraph{Graph families.} Various graph families admit good-quality tree-restricted shortcuts. \Cref{table:upper-lower-bounds-for-shortcuts} lists the known results. The last row of the table references graphs that exclude $\delta$-dense minors, meaning that all minors of $G$ have density (i.e., the ratio between the number of edges and vertices) at most $\delta$. We note that the result of \cite{ghaffari2020excluding} implies all other known upper bounds in the table (up to logarithmic factors): for instance, minor-excluded families have $\delta = O(1)$.\footnote{The excluded-dense-minor result of \cite{ghaffari2020excluding} improves the best known quality of tree-restricted shortcuts in minor-excluded graph families from $\tO(D^2)$ (proved in \cite{haeupler2018minor}) to $\tO(D)$.}

\begin{table}
  \centering
  \begin{tabular}{l l l l l l}
    \textbf{Graph Family} \
    & \multicolumn{4}{c}{\textbf{Tree-Restricted Shortcut Parameters }}
    & \textbf{Lower Bound} \\
    & & Block & Congestion & T-quality & $\Omega(d + c)$ \\
    \hline
    \rev{General} & \cite{gh2016lowcongestion} & 1~\footnotemark & $O(\sqrt{n})$ & $O(D + \sqrt{n})$ & $\tOmega(D + \sqrt{n})$ \\
    Pathwidth $k$ & \cite{haeupler2016near} & $O(k)$ & $O(k)$ & $O(kD)$ & $\Omega(kD)$ \\    
    Treewidth $k$ & \cite{haeupler2016near} & $O(k)$ & $O(k \log n)$ & $O(kD + k\log n)$ & $\Omega(kD)$ \\
    Genus $g$ & \cite{haeupler2016near} & $O(\sqrt{g})$ & $O(\sqrt{g}D\log D)$ & $O(\sqrt{g}D\log D)$ & $\Omega(\frac{\sqrt{g}D}{\log g})$ \\
    Planar & \cite{gh2016lowcongestion} & $O(\log D)$ & $O(D \log D)$ & $O(D\log D)$ & $\Omega(D \frac{\log D}{\log \log D})$ \\
    Minor-excluded & \cite{ghaffari2020excluding} & $O(1)$ & $O(D \log n)$ & $O(D \log n)$ & trivial $\Omega(D)$ \\
    No $\delta$-dense minors & \cite{ghaffari2020excluding} & $O(\delta)$ & $O(\delta D \log n)$ & $O(\delta D \log n)$ & $\Omega(\delta D)$ \\
  \end{tabular}
  \caption{Upper and lower bounds for \rev{tree-restricted} shortcuts.}
  \label{table:upper-lower-bounds-for-shortcuts}
\end{table}

\footnotetext{\rev{For general graphs, each part of size $|P_i| \ge \sqrt{n}$ is assigned the entire tree; giving them a block param. of $1$ and congestion of at most $\sqrt{n}$. Smaller parts can be handled separately in $\tilde{O}(\sqrt{n})$ rounds by using intra-part edges.}}

\paragraph{Applications.} Numerous distributed optimization tasks can be simplified and optimized by utilizing the part-wise aggregation primitive as a black-box subroutine. Applications include the MST, approximate Min-Cut, and approximate single-source shortest path (SSSP)~\cite{gh2016lowcongestion,haeupler2018faster,haeupler2018round}.

\begin{corollary}
  Suppose that a graph $G$ admits tree-restricted shortcuts of T-quality $q_T$. One can compute an (exact) MST in $\tilde{O}(q_T)$ rounds and $\tilde{O}(m)$ messages with high probability.
\end{corollary}

\rev{As a reminder, in the Min-Cut problem, one is given a graph $G = (V, E_G)$ with integer weights $w : E_G \to [1, \poly(n)]$ and needs to compute a set of edges $F \subseteq E_G$ that disconnect $G$ into at least $2$ components while minimizing the sum $\sum_{e \in F} w_e$. An $\alpha$-approximation to Min-Cut finds a set of edges that disconnects the graph whose aggregate weight is at most a multiplicative $\alpha$ factor larger than the optimal value.}

\begin{corollary}
  Suppose that a graph $G$ admits tree-restricted shortcuts of T-quality $q_T$. One can compute an $(1+\varepsilon)$-approximate (weighted) Min-Cut in $\tilde{O}(q_T) \cdot \poly(1/\eps)$ rounds and $\tilde{O}(m) \cdot \poly(1/\eps)$ messages with high probability.
\end{corollary}

\rev{In the Single-Source Shortest Path (SSSP), one is given a graph $G = (V, E_G)$ with integer weights $w : E_G \to [1, \poly(n)]$, a source $s \in V$, and needs to compute a spanning tree $T \subseteq E_G$ such that for each node $u$ we have that $d_T(s, u) = d(s, t)$ where $d(u, v)$ is the distance between $u, v \in V$ in $G$ with respect to the weight $w$, and $d_T(u, v)$ is their distance in the tree (with respect to $w$). An $\alpha$-approximation to SSSP requires the tree to satisfy $d_T(u, v) \le \alpha \cdot d(u, v)$ (note that the inequality $d_T(u, v) \ge d(u, v)$ is always satisfied).}

\begin{corollary}
  Suppose that a graph $G = (V, E_G)$ admits tree-restricted shortcuts of T-quality $q_T$. Each edge $e \in E_G$ has a weight $w_e$, and let $L$ be the weight-diameter of $G$. For any \rev{$\beta = (\log n)^{-\Omega(1)}$} one can compute an $L^{O(\log \log n) / \log(1 / \beta)}$-approximate SSSP in $\tilde{O}(q_T / \beta)$ rounds and $\tilde{O}(m / \beta)$ messages with high probability.
\end{corollary}

\rev{For instance, in the above corollary, setting $\beta = n^{-\eps}, \beta = 2^{-\Theta(\sqrt{n})}$, and $\beta = \log^{-\Theta(1/\eps)} n$ for a constant $\eps > 0$ one obtains a $\log^{O(1)} n$, $2^{\sqrt{\log n}}$, and $L^{\eps}$ approximations to SSSP, respectively.~\cite{haeupler2018faster}}

\paragraph{General shortcuts vs.~tree-restricted shortcuts.} One can easily construct pathological graph examples that admit good-quality \emph{general} shortcuts, but do not admit good-quality \emph{tree-restricted} shortcuts. \rev{For example, one can take the lower bound graph of \cite{DasSarma-11} which requires $\tOmega(\sqrt{n})$ rounds to solve MST and replace each edge with $\sqrt{n}$ parallel multi-edge copies. This immediately yields a $\tO(D) = \tO(1)$ MST solution via general shortcuts, whereas tree-restricted shortcuts are constrained by the original $\tOmega(\sqrt{n})$ lower bound.} Moreover, general shortcuts allow faster algorithms for several important graph families. For example, expander graphs and Erd\H{o}s-R\'{e}nyi random graphs admit general shortcuts of $\mathrm{dilation} + \mathrm{congestion} = \tO(1)$ for any set of parts; no such result is possible in the tree-restricted setting. However, it seems that the distributed construction of general shortcuts is a burdensome task even in highly structured graphs. The best-known result for shortcut construction and part-wise aggregation in expander graphs has round complexity $2^{O(\sqrt{\log n})} = n^{o(1)}$, significantly worse than the best existential result~\cite{ghaffari2018new}.

\subsection{Related work} \label{sec:relatedwork}

The complexity-theoretic issues in the design of distributed graph algorithms for
the CONGEST model have received much attention in the last decade. Researchers have studied many problems in-depth: Minimum-Spanning Tree~\cite{Garay-Kutten-Peleg,Kutten-Peleg,Peleg-Rubinovich-1999,Khan2008},
Maximum flow~\cite{Ghaffari:2015}, Minimum Cut~\cite{Ghaffari-Kuhn,nanongkai2014almost}, Shortest Paths, and Diameter~\cite{Nanongkai-paths,Frischknecht-Diameter-2012,Holzer-Paths-2012,lenzen2019distributed,Izumi2014}, and so on.
Most of those problems have $\tTheta(\sqrt{n} + D)$-round upper and lower 
bounds for some sort of approximation guarantee~\cite{DasSarma-11,lenzen2019distributed,Ghaffari-Kuhn,Elkin-2004,Peleg-Rubinovich-1999}. The guarantee of 
exact results sometimes yields a nearly-linear-time bound~\cite{Frischknecht-Diameter-2012}.
Note that almost all lower bounds above hold for graphs of small diameter (e.g., polylogarithmic in $n$). \rev{In such graphs we have that $\sqrt{n} \gg D$, making $\tO(D)$ algorithms strictly better than those requiring $\tO(D + \sqrt{n})$ rounds.}



\section{Preliminary: CONGEST Model}\label{sec:preliminaries}

We work in the classical CONGEST model~\cite{Peleg:2000}.
In this setting, a network is given as a connected undirected graph $G = (V, E_G)$ with diameter $D$.
Initially, nodes only know their immediate neighbors and they collaborate to compute some global function of the graph like the MST.
Communication occurs in synchronous rounds; during a round, each node can send $O(\log n)$ bits to each of its neighbors.
The nodes always correctly follow the protocol and never fail.
The goal is to design protocols that minimize the resource of time - the number of rounds before the nodes compute the solution.

We now precisely formalize the notion of solving a problem in this model, e.g.,  how are the input and output given.
While the formalization is specifically given for the MST, any other problem is completely analogous.
All nodes synchronously wake up in the first round and start executing some given protocol.
Every node initially only knows its immediate neighbors and the weight of each of its incident edges.
After a specific number of rounds, all nodes must simultaneously output
(i) the weight of the computed MST $\tau$
(ii) for each edge $e$ incident to it, a $0/1$ bit indicating if $e \in \tau$
.

\section{Tree-Restricted Shortcuts} \label{sec:treerestricted}

In this section we define tree-restricted shortcuts: a restricted version of low-congestion (i.e., general) shortcuts that are (i) simpler to work with, (ii) often equally powerful as the general shortcuts, (iii) offer deterministic routing schemes and, most importantly, (iv) can be efficiently constructed on any graph that contains them. Following the definitions, we rephrase the relevant prior work in our new term, showcase an efficient deterministic routing scheme, and finally state our main result and applications.

\subsection{Definition}

Tree-restricted shortcuts are low-congestion shortcuts with the additional property that $H_i$ is restricted to (the edges of) some spanning tree $T$. \rev{The running time of algorithms will depend on the depth of $T$, hence we will assume throughout the paper that $T$ is some tree of depth $O(D)$ (e.g., a BFS tree); the user of the framework is otherwise free to choose any tree $T$.}
\begin{definition}
  Let $\mathcal{H} = (H_1, H_2, ..., H_N)$ be a (general) shortcut on the graph $G = (V, E_G)$ with respect to the parts $\mathcal{P} = (P_i)_{i=1}^N$. Given a rooted spanning tree $T = (V, E_T) \subseteq G$ we say that a shortcut $\mathcal{H}$ is \emph{tree-restricted} or \textbf{$T$-restricted} if for each $i \in [N], H_i \subseteq E_T$ i.e.,  every edge of $H_i$ is a tree edge of $T$.
\end{definition}

Congestion and dilation are still well-defined for tree-restricted shortcuts. However, it is more convenient to use an alternative \textbf{block parameter}, \rev{which in turn also bounds the dilation. The block parameter (upper-)bounds the number of components of $P_i$, where two nodes $u, v \in P_i$ are in different components if they cannot reach each other via $H_i$.}

\begin{definition}
  Let $\mathcal{H} = (H_1, H_2, ..., H_N)$ be a $T$-restricted shortcut on the graph $G = (V, E_G)$ with respect to the parts $\mathcal{P} = (P_i)_{i=1}^N$. Fix a part $P_i$ and consider the connected components of the subgraph $(V, H_i)$. \rev{If a component contains at least one node of $P_i$, we call it a \textbf{block component} (e.g., an isolated $v \in P_i$ is a block component).} Furthermore, we say $\mc{H}$ has \textbf{block parameter} $b$ if the number of block components associated with each part is at most $b$.
\end{definition}

\begin{figure}[h]
  \centering
  \includegraphics[width=0.4\textwidth]{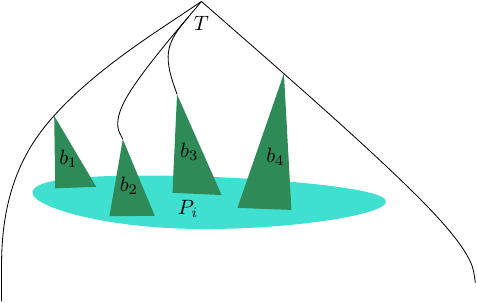}
  \caption{Illustration of a $T$-restricted shortcut subgraph for a part $P_i$, composed of block components $b_1, b_2, b_3$ and $b_4$.}
\end{figure}


\rev{Note that a connected component of $(V, H_i)$ without nodes in $P_i$ does not need to be counted; it does not need any information from the part-wise aggregation of part $i$. On the other hand, an isolated vertex $\{v\}$ where $v \in P_i$ must be counted.} \Cref{lemma:diameter-of-shortcut-with-b-blocks} argues that a block parameter of $b$ implies the dilation of $b(2\cdot\mathrm{depth}(T)+1)$. From now on, we will assume that $T$ is chosen to have depth $O(D)$, which is asymptotically minimal and achievable via a BFS tree. We note that distributedly computing a BFS tree is a classic problem with a simple $O(D)$ round CONGEST algorithm~\cite{Peleg:2000}.
\begin{lemma}
  \label{lemma:diameter-of-shortcut-with-b-blocks}
  Let $T$ be a spanning tree with depth at most $D$ and let $\mathcal{H} = (H_i : i \in [N])$ be a $T$-restricted shortcut with congestion $c$ and block parameter $b$ with respect to parts $\mathcal{P} = (P_i : i \in [N])$. Then the dilation of $\mathcal{H}$ is at most $b(2D+1)$.
\end{lemma}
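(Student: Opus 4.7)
My plan is to show that any two vertices of $G[P_i] + H_i$ can be joined by a walk that alternates between ``in-block'' paths in $T$ and single ``cross-block'' edges of $G[P_i]$, and that at most $b$ blocks and $b-1$ cross-edges are ever needed.

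First I would observe that each block component $B$ of $H_i$ is a connected subgraph of $T$ (because $H_i \subseteq E_T$), so $B$ is a subtree and its diameter is at most $2D$: the unique path between two vertices of $B$ coincides with their path in $T$, which passes through their LCA and hence has length at most $2 \cdot \mathrm{depth}(T) = 2D$.

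Second, I would argue that every vertex of $P_i$ lies in some block component -- if $v \in P_i$ has no $H_i$-edges, then $\{v\}$ is a singleton component of $(V, H_i)$ intersecting $P_i$, hence a block component. Contracting each block component to a super-node yields a quotient graph $Q$ on at most $b$ super-nodes whose edges are the remaining $G[P_i]$-edges. Since $G[P_i]$ is connected by the definition of parts, $Q$ is connected, so $\mathrm{diam}(Q) \le b-1$. (Any component of $(V, H_i)$ not intersecting $P_i$ can be dropped from $H_i$ without affecting the definition of the shortcut, so we may assume none exists.)

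Third, to bound the distance between arbitrary $u, v$ in the subgraph, I would let $B^{(0)} \ni u$ and $B^{(k)} \ni v$ be their block components, take a shortest super-node path $B^{(0)}, B^{(1)}, \ldots, B^{(k)}$ in $Q$ (so $k+1 \le b$), and lift it: for each consecutive pair pick a witnessing edge $(x_l, y_{l+1}) \in E(G[P_i])$ with $x_l \in B^{(l)} \cap P_i$ and $y_{l+1} \in B^{(l+1)} \cap P_i$, and splice in-block $T$-paths $u \rightsquigarrow x_0$, $y_1 \rightsquigarrow x_1$, \ldots, $y_k \rightsquigarrow v$ between them. The total length is at most $(k+1) \cdot 2D + k \le 2bD + (b-1) \le b(2D+1)$, as required.

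I expect the only subtle step to be the bookkeeping in the second bullet: one has to verify that \emph{every} vertex of the subgraph (not only those of $P_i$) is covered by a block component, and that the quotient $Q$ is connected using $G[P_i]$-edges alone -- both of which hinge on the assumption that the parts $P_i$ are individually connected.
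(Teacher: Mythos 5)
Your proof is correct and follows essentially the same route as the paper's: contract each block component (a subtree of $T$, hence of diameter at most $2D$) into a supernode, use connectivity of $G[P_i]$ to bound the supergraph diameter by $b-1$, and lift a shortest supernode path to get $2bD + (b-1) \le b(2D+1)$. You are somewhat more careful than the paper about the edge cases (singleton components of vertices in $P_i$, and components of $H_i$ not intersecting $P_i$), but the argument is the same.
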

\begin{proof}
  Fix $i \in [N]$. Contract every block component of $H_i$ into a supernode and remove all other nodes. This supergraph will contain $b' \le b$ supernodes and will be connected (because $E_G[P_i]$ is connected). Hence its diameter is $b'-1 \le b-1$. Every supernode corresponds to a block component of diameter $2D$, implying the diameter of $E_G[P_i] \cup H_i$ is at most $2bD + b-1 < b(2D+1)$.
\end{proof}


\subsection{Shortcuts on genus-bounded and planar graphs}
We say that a graph \textbf{admits} tree-restricted shortcuts if there always exists good tree-restricted shortcuts, even with respect to an adversarially chosen set of parts and spanning tree $T$ (of low depth).
\begin{definition}
  A graph $G = (V, E_G)$ of diameter $D$ \textbf{admits} tree-restricted shortcuts of congestion $c$ and block parameter $b$ if for each spanning tree $T$ of depth $O(D)$ and each set of disjoint and connected parts $\mathcal{P} = (P_i \subseteq V)_{i=1}^N$ there exists a $T$-restricted shortcut of congestion $c$ and block parameter $b$.
\end{definition}

Tree-restricted shortcuts are particularly useful on genus-bounded (e.g., planar) graphs. In particular, we can reinterpret the low-congestion result of Haeupler and Ghaffari~\cite{gh2016lowcongestion} using our notation.
\begin{theorem*}[Haeupler and Ghaffari~\cite{gh2016lowcongestion}]
  \label{theorem:shortcut-existence-on-bounded-genus}
  Genus-$g$ graphs admit tree-restricted shortcuts with congestion $O(gD \log D)$ and block parameter $O(\log D)$.
\end{theorem*}

\rev{We note that the paper~\cite{gh2016lowcongestion} proves the analogous claim about general shortcuts and does not explicitly talk about tree-restricted shortcuts. However, their proof implicitly argues precisely about the congestion and block parameter of tree-restricted shortcuts without explicitly referring to them. In particular, their $O(D \log D)$ dilation bound is implicitly derived by arguing about the block parameter being $O(\log D)$ and using \Cref{lemma:diameter-of-shortcut-with-b-blocks}.}
However, note that their theorem proves only the existence of such shortcuts. While the original paper does describe an algorithm that can in principle be used to compute them, it requires an embedding of $G$ on a surface of genus $g$. It is an open problem to compute such an embedding efficiently in the CONGEST model.

\subsection{Deterministic routing on tree-restricted shortcuts}\label{subsec:routing-on-tree-restricted-shortcuts}

In this section, we show how the structure of tree-restricted shortcuts can be useful in facilitating communication within parts. From a high level, the tree-like structure allows for fast, deterministic, and simultaneous broadcasting/convergecasting on block components; this can be easily extended to true part-wise aggregation. For clarity, broadcast is defined as an operation on a rooted (sub)tree that floods some value from the root down to all other nodes; convergecast is defined as an aggregation of nodes' private values starting from the leaves and towards the root (ending in the root knowing the final aggregate). \Cref{lemma:routing-on-trees} gives a way how to simultaneously perform these primitives on subtrees.
\begin{lemma}[Routing on subtrees]
  \label{lemma:routing-on-trees}
\global\def\StateLemmaRoutingOnTrees{
  Let $T$ be a rooted tree of depth $O(D)$ and let $T_1, T_2, \ldots, T_k \subseteq T$ be a family of subtrees where each edge of $T$ is contained in at most $c$ subtrees, i.e., $| \{ i \mid e \in T_i, i \in [k] \}| \le c$. There is a simple deterministic algorithm that can perform a convergecast/broadcast on all of the subtrees in $O(D + c)$ CONGEST rounds.
}\StateLemmaRoutingOnTrees
\end{lemma}
\global\def\ProofRoutingOnTrees{\begin{proof}
  We describe the convergecast algorithm. Each message sent during the algorithm will have a subtree-ID $i$ associated with it. Suppose that a node $v$ is in a subtree $T_i$ (a node can be contained in multiple subtrees). We say $(v, i)$ is active when $v$ receives a message associated with $i$ from all of its $T_i$-children (if $v$ is a leaf in $T_i$, then $(v, i)$ is immediately active). When $(v, i)$ becomes active, it will schedule an ID-$i$ message to be sent along its $T$-parent edge; note that two messages scheduled along the same edge cannot have the same ID. Each round, if multiple messages are scheduled over the same $T$-edge, the algorithm sends the message associated with the ID $i$ that minimizes $\mathrm{depth}_T(\mathrm{root}(T_i))$. Here, $\mathrm{depth}_T(v)$ is the length of the unique path between $\mathrm{root}(T_i)$ and $v$ in $T$. Ties are broken by the ID $i$ itself. The convergecast and broadcast operations are symmetric, so we will only prove the lemma for convergecasts.

  We now analyze the algorithm.
  Fix a node $v$. It is sufficient to prove that no message gets transmitted along $v$'s parent edge after $\mathrm{height}_T(v) + c = O(D + c)$ rounds where $\mathrm{height}_T(v)$ is the maximum distance between $v$ and any leaf in $T$ that is a descendant of $v$ (the unique path between the $T$-root and the leaf goes through $v$).

  Note that any message that gets transmitted along $v$'s parent edge must belong to a subtree $T_i$ that contains that edge. Let $I = (i_1, i_2, ..., i_k)$ be the IDs of subtrees that contain $v$'s parent edge, ordered by their priority (as described). In particular, we say that $T_{i_p}$ has priority $p$. The congestion condition stipulates that $k \le c$.

  We will prove by induction that for $p \in [k]$ the message associated with $i_p$ will be transmitted no later than round $\mathrm{height}_T(v) + p$. The claim clearly holds for the leaves of $T$. Note that (i) the relative priority-ordering between $I$ is unchanged with respect to any node of $T$ (other than $v$), (ii) any subtree $T_i$ that is contained in the set of descendants of $v$, but does not contain the parent edge of $v$ will have lower priority than any subtree in $I$.

  Fix $i_p$. By the induction hypothesis, messages corresponding to $\{i_1, \ldots, i_{p-1}\}$ will be sent strictly before round $\mathrm{height}_T(v) + p$. It is sufficient to argue that $v$ has received messages corresponding to $i_p$ from all of its $T_{i_p}$-children before round $\mathrm{height}_T(v) + p$. However, this can be directly argued from the induction: for any child $w \in T_{i_p}$ we have $\mathrm{height}_T(w) \le \mathrm{height}_T(v) - 1$, hence the priority of $i_p$ is at most $p$ with respect to $w$.
  Hence $v$ will send the message corresponding to $i_p$ no later than round $\mathrm{height}_T(v) + p$ and we are done.%
\end{proof}}
\fullOnly{\ProofRoutingOnTrees}
\shortOnly{\begin{proof}Deferred to \Cref{sec:routing-proofs}.\end{proof}}


Convergecast and broadcast are used to facilitate routing in tree-restricted shortcuts. We can intuitively envision the shortcut edges $H_i$ as a family of subtrees (in our notation: block components). Aggregation of values within each block component can be exactly achieved by simultaneously convergecasting and broadcasting in all block components. We extend this result to true part-wise aggregation.



\begin{theorem}[Routing on tree-restricted shortcuts]
  \label{theorem:routing-on-tree-restricted-shortcuts}
\global\def\StateTheoremRoutingOnTreeShortcuts{
  Given a $T$-restricted shortcut with congestion $c$ and block parameter $b$, there are deterministic distributed algorithms that terminate in $O(b(D + c))$ rounds for the following problems.
  \begin{enumerate}
    \item Electing a leader for each of the parts in parallel.
    \item Convergecasting $O(\log n)$-bit messages to the leader of each part in parallel.
    \item Broadcasting a $O(\log n)$-bit message from the leader of each part in parallel.
  \end{enumerate}
}\StateTheoremRoutingOnTreeShortcuts
\end{theorem}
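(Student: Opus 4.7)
The plan is to decompose each part $P_i$ into its at most $b$ block components---each a subtree of $T$---and repeatedly invoke \Cref{lemma:routing-on-trees} in parallel across all of them. Since the shortcut has congestion $c$, every edge of $T$ lies in at most $c$ block components globally, so \Cref{lemma:routing-on-trees} allows a single parallel convergecast or broadcast over the entire family of block components in $O(D+c)$ rounds. This will be the workhorse primitive on top of which everything else is built.

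For leader election, each node $v$ holds a tentative minimum $m_v$, initialized to $\mathrm{ID}(v)$. I would run $b$ iterations of two steps: (a) invoke the workhorse to compute the block-minimum of $m_v$ in each block component and broadcast it back, updating every $m_v$ to the block-minimum; (b) in one extra round, each node exchanges $m_v$ with its neighbors inside $G[P_i]$ and takes the minimum of the received values. Because the parts are pairwise disjoint, each graph edge lies in at most one $G[P_j]$, so step (b) adds no congestion. Contracting each block component to a supernode yields a connected ``block-tree'' on at most $b$ supernodes per part, hence of diameter at most $b-1$; after $b$ iterations every $m_v$ therefore equals the part-wide minimum ID, and the unique node whose ID attains it declares itself leader. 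Total cost: $O(b(D+c))$.

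For aggregate convergecast to the elected leader (and its symmetric broadcast), I would first set up a rooted block-tree oriented toward the leader's block by an analogous $b$-iteration Bellman--Ford style propagation of distances in the block-tree, after which every non-root block selects a parent block together with one specific ``bridge'' edge of $G[P_i]$ connecting them. The convergecast then runs in $b$ phases: in each phase, every block aggregates its currently stored value to the endpoint of its bridge edge via one invocation of the workhorse, that endpoint forwards the aggregate across the bridge edge, and the receiving block folds the incoming value into its own stored value for the next phase. After $b$ phases the leader's block holds the full aggregate and a final intra-block convergecast delivers it to the leader; broadcast is the exact reverse. Each routine runs in $O(b(D+c))$ rounds.

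The main technical concern is congestion bookkeeping across parts. Intra-block steps inherit the shortcut's congestion-$c$ guarantee on $T$-edges via \Cref{lemma:routing-on-trees}; inter-block messages travel on edges of $G[P_i]$, and since the sets $G[P_j]$ are pairwise edge-disjoint and only $O(1)$ messages per phase cross any bridge edge, no additional asymptotic congestion arises. A secondary subtlety is that a block component can contain ``helper'' nodes outside $P_i$, but they merely relay during the tree-routing invocations and never need to interpret messages, which does not affect correctness.
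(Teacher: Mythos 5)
Your proposal is correct and follows essentially the same approach as the paper: view each shortcut subgraph as a supergraph of at most $b$ block components, use \Cref{lemma:routing-on-trees} as the $O(D+c)$ intra-block primitive, flood the minimum ID for $b$ supersteps to elect a leader, and build a rooted structure on the supergraph (your Bellman--Ford variant versus the paper's BFS, both $O(b)$ supersteps) for convergecast/broadcast. The congestion bookkeeping and the handling of helper nodes outside $P_i$ are also consistent with the paper's argument.
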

\global\def\ProofRoutingOnTreeShortcuts{\begin{proof}
  All of these algorithms have a common flavor: for each part we perceive its shortcut edges $H_i$ as a supergraph of at most $b$ supernodes where each supernode corresponds to a block component. We proceed to describe each of the algorithms on the supergraph and implicitly assume that intra-block communication happens after each step of the algorithm.

  Communication within block components can be done in parallel using \Cref{lemma:routing-on-trees}: all the nodes of a block component convergecast the relevant information to the block-root and subsequently the block-root broadcasts the result back. 

  \textbf{Electing a leader for each part} is performed by electing a leader for each supernode (block component) and broadcasting the leader to all neighborhood supernodes for $b$ steps. Every supernode keeps the smallest leader ID ever seen as its current leader. After $b$ rounds all the supernodes have the same leader. The algorithm requires $O(b(D + c))$ rounds as each of the $b$ broadcasting steps is followed by an $O(D + c)$ intra-block communication step.

  \textbf{Broadcasting/convergecasting from/to the leader for each part} can be done by building a BFS tree from the leader-supernode. We can utilize the standard distributed BFS algorithm on the supergraph requiring $O(b)$ steps. The algorithm similarly requires $O(b(D + c))$ rounds as each of the $O(b)$ BFS steps is followed by an $O(D + c)$-round intra-block communication step.
\end{proof}}
\fullOnly{\ProofRoutingOnTreeShortcuts}
\shortOnly{\begin{proof}Deferred to \Cref{sec:routing-proofs}.\end{proof}}

We also state a simple technical lemma we use for the construction of tree-restricted shortcuts.
\begin{lemma}
  \label{lemma:det-check-if-block-at-most-b}
  \global\def\StateDetCheckBlocks{
    Given a $T$-restricted shortcut with congestion $c$, a deterministic distributed algorithm can identify all parts with at most $b'$ block components. Specifically, after the algorithm terminates each node within a part $i$ knows if $P_i$ is composed of more than $b'$ block components. The algorithm executes in $O(b'(D + c))$ rounds.
  }\StateDetCheckBlocks
\end{lemma}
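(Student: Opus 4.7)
The plan is to run the routing pipeline of \Cref{theorem:routing-on-tree-restricted-shortcuts} with a hard cap of $b'$ supernode-level iterations and append a single agreement-verification round. First, each vertex locally recognises when it is the root of its block component for a given part $P_i$ --- namely, when it is the $T$-root or its parent edge does not lie in $H_i$ --- using the distributed shortcut representation described just before \Cref{theorem:routing-on-tree-restricted-shortcuts}. Applying \Cref{lemma:routing-on-trees} to the family of all block components, which has $T$-congestion $c$, each block root broadcasts its identity within its block in $O(D+c)$ rounds, so every vertex knows its containing block for every relevant part.

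Next, I would execute the leader-election, super-BFS, and convergecast subroutines of \Cref{theorem:routing-on-tree-restricted-shortcuts} in parallel across all parts, capping each supernode-level loop at $b'$ iterations. Each supernode contributes the value $1$ to the convergecast, so the proposed leader of each part receives a tentative count $\tilde{b}_i$. By \Cref{lemma:routing-on-trees} one supernode-level iteration costs $O(D+c)$ real rounds, keeping the whole phase within $O(b'(D+c))$ rounds. The verification is a single additional supernode-level round in which each supernode sends its converged leader-ID to every supergraph-neighbour; any mismatch raises an error flag, which is convergecast up the (partial) super-BFS-tree to the proposed leader. A part is output as satisfying $b_i \le b'$ iff the leader sees no error flag and $\tilde{b}_i \le b'$.

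The main obstacle is arguing completeness for the case $b_i > b'$, and the key observation that makes the simple verification sufficient is the following: if the capped leader election nevertheless converges on a common leader within $P_i$, then the min-ID supernode must have eccentricity at most $b'$ in the supergraph, which forces the capped super-BFS from it to span the entire supergraph and the convergecast to return the true count $\tilde{b}_i = b_i > b'$, so the part is rejected by the threshold check. The complementary sub-case, in which leader election fails to converge, is caught directly by two adjacent supernodes disagreeing in the verification round. For the other direction ($b_i \le b'$), the supergraph has at most $b'$ supernodes and diameter strictly less than $b'$, so all three subroutines terminate correctly within the cap, no mismatch arises, and $\tilde{b}_i = b_i \le b'$; the total running time is $O(b'(D+c))$ as claimed.
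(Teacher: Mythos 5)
Your proposal is correct and follows essentially the same route as the paper: run min-ID leader election on the block-component supergraph for $b'$ capped iterations, build a super-BFS tree from each proposed leader, detect disagreement between adjacent supernodes, and otherwise convergecast the supernode count and compare it to $b'$, with each supernode-level step implemented via \Cref{lemma:routing-on-trees} in $O(D+c)$ rounds. Your completeness argument (that convergence of the capped election forces the min-ID supernode to have eccentricity at most $b'$, so the count comes back exact) is a slightly more explicit justification of a case the paper handles only implicitly, but the algorithm and analysis are the same.
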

\global\def\ProofDetCheckBlocks{\begin{proof}
    Similarly to the proof of \Cref{theorem:routing-on-tree-restricted-shortcuts}, for each part $P_i$ we consider the (connected) supergraph where each supernode corresponds to a block component of $H_i$. We need to find all parts whose supergraphs have at most $b'$ supernodes.

  Each supernode broadcasts its leader for exactly $b'$ rounds and every supernode keeps the minimum ID as their current leader. Subsequently, each leader $r$ (there may be multiple ones as we have not bounded the block parameter) tries to build a BFS tree comprised of all the nodes that believe $r$ is the leader. We can detect the existence of multiple leaders as in that case each BFS tree will contain two neighboring supernodes in different BFS trees and report failure. If this is not the case (all the supernodes of a part belong to the same BFS tree), we can convergecast the number of supernodes back to the root and subsequently broadcast their count back.
\end{proof}}
\fullOnly{\ProofDetCheckBlocks}
\shortOnly{\begin{proof}Deferred to \Cref{sec:routing-proofs}.\end{proof}}

\textbf{Comparison with routing on general shortcuts:} Ghaffari and Haeupler~\cite{gh2016lowcongestion} give a method for routing on general shortcuts in $O(\mathrm{dilation} \cdot \log n + \mathrm{congestion})$ rounds that is randomized and assumes a leader is already elected for each part. They describe a process of leader election via a complicated randomized bootstrapping process that takes $O(\mathrm{dilation}\cdot\log^2 n + \mathrm{congestion} \cdot \log n)$ rounds. We contrast those results with our current tree-restricted shortcut routing where leader election is simple, deterministic, and essentially no more difficult than a single convergecast+broadcast. The downside is that non-tree-restricted shortcuts sometimes offer better quality guarantees and therefore better performance.

\subsection{Main result and applications}\label{sec:main-results-and-applications}

The main contribution of the paper is to introduce a general framework for finding \rev{near-optimal tree-restricted} shortcuts in graphs where the only assurance is that they exist. We restate the result.

\begin{theorem*}[Detailed version of \Cref{theorem:shortcuts-given-guarantee}]
Let $\mathcal{P} = (P_1, \ldots, P_N)$ be parts in the graph $G$ with a spanning tree $T \subseteq G$ such that there exists a $T$-restricted shortcut with congestion $c$ and block parameter $b$. There exists a distributed CONGEST algorithm that finds a $T$-restricted shortcut with congestion $O(c \log N)$ and block parameter $3b$ with high probability (with probability at least $1 - n^{-O(1)}$, where any constant can be chosen in the exponent). The shortcut can be found in $O(D \log n\log N + bD \log N + b c\log N)$ rounds.
\end{theorem*}

We note that the Theorems \ref{theorem:shortcuts-given-guarantee} and \ref{theorem:shortcut-existence-on-bounded-genus} immediately give a novel result: an algorithm for constructing shortcuts on bounded genus graphs.
\begin{corollary}
  Given a genus-$g$ graph with diameter $D$ and $N$ parts there is a \rev{(randomized)} distributed algorithm that computes a tree-restricted shortcut with congestion $O(gD\log D\log N)$ and block parameter $O(\log D)$ in $O(gD\log^2D\log N)$ rounds with high probability.
\end{corollary}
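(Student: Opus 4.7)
The corollary is a straightforward composition of the two preceding results, so my plan is essentially to thread the existence statement through the construction algorithm. First I would fix $T$ to be a BFS tree of $G$, which has depth $D$ and can be constructed in $O(D)$ CONGEST rounds; this is the spanning tree to which all shortcuts will be restricted.

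Next I would invoke Theorem \ref{theorem:shortcut-existence-on-bounded-genus} to assert (non-constructively) the existence of a $T$-restricted shortcut on $G$ with congestion $c = O(gD\log D)$ and block parameter $b = O(\log D)$. This is precisely the hypothesis needed by Theorem \ref{theorem:shortcuts-given-guarantee}, so applying that theorem to $G$, $T$, and the parts yields, w.h.p., a $T$-restricted shortcut with congestion $O(c\log N) = O(gD\log D\log N)$ and block parameter $3b = O(\log D)$, matching the quality guarantees claimed in the corollary.

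The round complexity is obtained by plugging the same $b$ and $c$ into the bound from Theorem \ref{theorem:shortcuts-given-guarantee}:
\[
O(D\log n\log N + bD\log N + bc\log N) = O\!\bigl(D\log n\log N + D\log D\log N + gD\log^2 D\log N\bigr),
\]
and the last term dominates the middle one. The one small obstacle is the $D\log n\log N$ summand, since in principle $\log n$ could exceed $\log^2 D$; I would handle this by noting that for the stated complexity one implicitly absorbs this term into $gD\log^2 D\log N$ (it is negligible in the regimes of interest, in particular whenever $\log n = O(g\log^2 D)$, which covers bounded-genus graphs with $n$ polynomial in any reasonable function of $D$), so the overall round complexity simplifies to $O(gD\log^2 D\log N)$, as claimed. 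No further work is needed beyond this composition.
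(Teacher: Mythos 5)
Your proof is correct and is essentially identical to the paper's, which states the corollary as an immediate composition of \Cref{theorem:shortcut-existence-on-bounded-genus} and \Cref{theorem:shortcuts-given-guarantee} with $c = O(gD\log D)$ and $b = O(\log D)$. The $D\log n\log N$ term you flag is indeed silently dropped in the paper's stated round bound as well, so your explicit caveat about the regime where $\log n = O(g\log^2 D)$ is, if anything, more careful than the original.
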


Next, we explain how to use tree-restricted shortcuts to distributedly compute the MST on genus-$g$ graphs. Similarly to \cite{gh2016lowcongestion}, we incorporate the shortcuts into the classic 1926 algorithm of Boruvka~\cite{Boruvka26}.

\begin{corollary}
  Given a genus-$g$ graph with $n$ nodes and diameter $D$, there is a \rev{(randomized)} distributed algorithm that computes the Minimum Spanning Tree in $O(gD\log^2D\log^2 n)$ rounds with \rev{high probability}.
\end{corollary}
For completeness we give a brief proof outline:
\begin{proof}
  Boruvka's algorithm runs in $O(\log n)$ phases. Each phase starts with a partition of the graph into connected parts; each part has previously computed the MST on the subgraph induced by the part. Initially, the algorithm starts with the trivial partition in which each node is in its own part. During each phase, each part $P_i$ suggests a merge along the minimum-weighted edge going out of $P_i$. It is well-known that all such edges belong to some MST. By computing a tree-restricted shortcut for each part in $O(gD\log^2D\log n)$ rounds and using our convergecast algorithm on it in $O(gD\log^2 D)$ rounds we can compute the min-weight outgoing edge from each part. A slight difficulty remains: many parts could chain together to form a new part, making the assignment of part IDs in the newly merged part difficulty. This can be avoided by restricting the merge shapes to be star graphs: each part can independently mark itself as a \textbf{head} or \textbf{tail} with probability $\frac{1}{2}$; we are only allowed to merge tails to heads. The number of phases remains $O(\log n)$ as every minimum-weighted outgoing edge will be used for merging with probability at least $\frac{1}{4}$, thus reducing the expected number of parts by a constant.
\end{proof}

\section{Constructing Tree Restricted Shortcuts} \label{sec:algorithm}

In this section, we describe an algorithmic framework that solves the problem of finding near-optimal tree-restricted shortcuts.

\subsection{Overview of the algorithmic framework}

Our algorithm \FnName{FindShortcut} uses two separate subroutines:
\begin{itemize}
\item \textbf{Core:} This subroutine finds a good-quality shortcut with respect to at least a constant fraction of the parts. As a prerequisite, we assume we constructed a tree $T$ with depth $O(D)$ such there exists a $T$-restricted shortcut with congestion $c$ and block parameter $b$. Note that we only assume the \rev{tree-restricted} shortcut's existence.

\begin{lemma}
  \label{lemma:corefast}
  \global\def\StateLemmaCore{
    Let $T$ be a spanning tree with depth $O(D)$ and assume there exists a $T$-restricted shortcut with congestion $c$ and block parameter $b$. The subroutine \FnName{CoreFast} finds a $T$-restricted shortcut $\mathcal{H'} = ( H'_i )_{i=1}^N$ with the following properties:
  \begin{enumerate}
  \item The congestion of $\mathcal{H'}$ is at most $8c$ with high probability.
  \item There exists a subset of parts $\mathcal{P}' \subseteq \mathcal{P}$ with size at least $|\mathcal{P'}| \ge \frac{|\mc{P}|}{2}$ such that each part in $\mc{P}'$ has at most $3b$ block components.
  \end{enumerate}
  The subroutine takes $O(D\log n + c)$ CONGEST rounds to execute \rev{with high probability}. Upon completion, each node knows for each of its incident edges which parts are they assigned to in $\mathcal{H'}$.
}\StateLemmaCore 
\end{lemma}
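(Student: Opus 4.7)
The plan is to use a randomized edge-assignment scheme in which parts compete for tree edges via random priorities. Each part $P_i$ samples an independent uniform priority $\pi_i \in [0,1]$. Writing $T_i$ for the Steiner hull of $P_i$ in $T$ (the minimal subtree of $T$ containing $P_i$) and $N_e := \{i : e \in T_i\}$ for the set of parts whose hulls pass through tree edge $e$, the algorithm hands $e$ to the $8c$ parts of $N_e$ with smallest priorities (breaking ties by part ID), and sets $H'_i$ to be the edges of $T_i$ assigned to $i$. The congestion bound is then immediate by construction: no tree edge is in more than $8c$ subgraphs. The ``with high probability'' qualifier accommodates the randomization used in the distributed implementation of the top-$8c$ selection.

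The block parameter argument has two ingredients. First, a combinatorial fact: since $T_i$ is a tree, removing $r$ edges from $T_i$ produces $r+1$ components on $T_i$'s vertex set, so the block parameter of $H'_i$ is at most $1 + |T_i \setminus H'_i|$. Second, a double-counting bound on the total loss $L := \sum_i |T_i \setminus H'_i| = \sum_e (|N_e| - 8c)^+$. To bound $L$, I would extract a maximal variant $\tilde{\mathcal{H}}$ of the hypothesized $(c,b)$-shortcut $\mathcal{H}^*$ by greedily adding tree edges to each $\tilde H_i$ as long as the congestion cap $c$ is not exceeded; since adding edges can only merge block components, $\tilde{\mathcal{H}}$ retains block parameter at most $b$, while $|\tilde U_e| = \min(|N_e|, c)$ for every tree edge. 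Relating $|T_i \setminus \tilde H_i|$ back to the block parameter (and carefully accounting for any non-block components created in $\tilde H_i$) should give $\sum_e (|N_e| - c)^+ = O(bN)$, and since $(|N_e|-8c)^+ \le (|N_e|-c)^+$, this yields $L = O(bN)$. A pigeonhole step then guarantees at least $N/2$ parts with $|T_i \setminus H'_i| = O(b)$, hence block parameter at most $3b$.

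The main obstacle is carrying out the whole scheme distributedly in $O(D \log n + c)$ rounds. Computing the cardinalities $|N_e|$ for every tree edge is straightforward via a convergecast that aggregates per-subtree counts of parts intersecting vs.\ entirely contained in each subtree, taking $O(D)$ rounds. The top-$8c$ selection per edge is accomplished by pipelining the $O(c)$ smallest priorities upward through $T$, costing an additional $O(D+c)$ rounds. The membership decisions must then be disseminated back down along each $T_i$ in parallel; handling many overlapping Steiner hulls simultaneously without overloading any edge, while distributing $O(\log n)$-bit priority/part labels, accounts for the extra $O(D \log n)$ factor in the running time. The other delicate piece is the accounting step bounding $\sum_e (|N_e| - c)^+$ by $O(bN)$, which requires care because the maximal variant $\tilde{\mathcal{H}}$ can still retain ``useless'' non-block components in $\tilde H_i$ that must be controlled.
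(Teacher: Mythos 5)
Your scheme replaces the paper's mechanism with a priority-based top-$8c$ selection over Steiner hulls; the congestion bound is indeed immediate, but the block-parameter argument breaks at its central counting step. The claim $\sum_e (|N_e|-c)^+ = O(bN)$ is false. Take $T$ to be the path $x_1-\cdots-x_m$ and let $G$ additionally contain the chords $x_i x_{m+1-i}$; set $P_i=\{x_i,x_{m+1-i}\}$ for $i\le m/2$, so $N=m/2$. Each $G[P_i]$ is already connected, so $H_i=\emptyset$ is a $T$-restricted shortcut with congestion $c=1$ and block parameter $b=2$. Yet the Steiner hull of $P_i$ is the entire subpath from $x_i$ to $x_{m+1-i}$, so every edge in the middle half of the path lies in $\Omega(N)$ hulls and $\sum_e(|N_e|-c)^+=\Omega(N^2)\gg O(bN)$. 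The root cause is that the existence of a $(c,b)$-shortcut bounds the number of parts whose \emph{chosen subgraph} $H_i$ contains $e$, not the number of parts whose Steiner hull is \emph{forced} through $e$: a part can be held together by $G[P_i]$-edges, or accept up to $b$ separate block components, and never need $e$ at all. The same example shows your per-part bound (block parameter $\le 1+|T_i\setminus H'_i|$) is hopelessly loose: most parts lose $\Theta(m)$ hull edges yet have block parameter $2$, so a pigeonhole on the total loss cannot yield $3b$. Your ``maximal variant'' does not close this either: a maximal $\tilde H_i$ with block parameter $b$ can still omit arbitrarily many hull edges, because the components those omissions create need not intersect $P_i$.

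The idea you are missing is the paper's \emph{blocking} rule. There, each part propagates its ID up the ancestor paths of its nodes, and an edge that would be used by more than $2c$ parts is declared unusable and stops all further propagation. Blocking is what makes the charging work: each block component of the canonical shortcut is blamed for at most \emph{one} unusable edge (the first one above it), so a part generates at most $b$ blames in total; each unusable edge collects at least $2c-c=c$ blames, since it sees at least $2c$ parts of which at most $c$ legitimately contain it, giving $|U|\le Nb/c$; and since each unusable edge lies in at most $c$ canonical subgraphs, the total number of ``missed'' canonical edges is at most $Nb$, so at most $N/2$ parts miss $2b$ or more and the remaining parts have block parameter at most $3b$. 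In your scheme a part keeps competing for (and losing) every contested edge of its hull, so no per-part charge of the form $O(b)$ is available. Separately, the stated running time in the paper comes from subsampling parts (each active with probability $\Theta(\log n / c)$) and thresholding at $4cp$ to detect over-congested edges via a Chernoff bound, followed by a single $O(D+c)$ tree-routing pass to disseminate the full ID lists; this sampling is also the source of the ``with high probability'' qualifier on the congestion, which your deterministic top-$8c$ selection would not need but which is moot given the gap above.
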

\rev{We present two versions of the core subroutine for purposes of exposition}. We present a deterministic and simper \FnName{CoreSlow} requiring $O(D\cdot c)$ rounds and a randomized \FnName{CoreFast} requiring $O(D\log n + c)$ rounds. We note that the \FnName{CoreFast} subroutine is the only randomized building block of our framework. Therefore, we can replace it with a deterministic (albeit slower) version at a cost of an additional $\frac{c}{\log n}$ factor.

\item \textbf{Verification:} This subroutine is used to identify the parts $i$ for which the shortcut edges $H_i$ have a sufficiently small number of block components. \rev{The following result follows directly from \Cref{lemma:det-check-if-block-at-most-b}.}
\begin{corollary}
  \label{lemma:verification}
  \global\def\StateLemmaVerification{
  Given a tree $T$ with depth at most $D$ and a tentative $T$-restricted shortcut $\mathcal{H'}$ with congestion $c$, the deterministic subroutine \FnName{Verification} finds all parts $\mathcal{P'} \subseteq \mathcal{P}$ whose designated shortcuts have at most $b'$ block components. The subroutine takes $O(b'(D + c))$ CONGEST rounds to execute. Upon completion, each node knows whether its part is in the set $\mathcal{P'}$ or not.
}\StateLemmaVerification
\end{corollary}
\end{itemize}



We use the subroutines in \FnName{FindShortcut} that implements the construction of \Cref{theorem:shortcuts-given-guarantee}.

\paragraph{Algorithm FindShortcut:} We run the \FnName{CoreFast} subroutine that computes a shortcut $\mathcal{H'} = \{ H'_1, \ldots, H'_N \}$ with congestion $8c$, but possibly an unacceptably large block parameter. The next step is to run the \FnName{Verification} subroutine that finds all parts whose computed shortcut edges $H'_i$ have at most $3b$ block components. We call those parts \textbf{good} and fix their computed shortcut edges and discard the rest. The subroutine is iteratively repeated for $O(\log N)$ rounds at which point the parts have been marked as good.

\global\def\ProofFindShortcuts{
\begin{proof}[Proof of \Cref{theorem:shortcuts-given-guarantee}]
  By \Cref{lemma:corefast}, in each iteration we find a shortcut with congestion $8c$ and block parameter $3b$ for at least half of the parts that have not yet been marked as good, \whp This implies that after $O(\log N)$ iterations all the parts are marked as good. This further implies that the congestion of $\mathcal{H'}$ is $O(c \log N)$ as the congestion of the union of partial shortcuts is at most the sum of congestion of individual partial shortcuts.

  Finally, the number of rounds is at most $O(\log N)$ times the combined number of rounds of the \FnName{CoreFast} and \FnName{Verification} subroutines, namely $O(\log N \cdot (D\log n + c + bD + bc)) = O(D \log N \log n + b D \log N + b c \log N)$ w.h.p.
\end{proof}
}\fullOnly{\ProofFindShortcuts}
\shortOnly{\begin{proof}Deferred to \Cref{sec:construction-proofs}.\end{proof}}


\subsection{Warm-up: an $O(D\cdot c)$-round version of the core subroutine}

In this section, we explain a simple and deterministic, but slower version of the core subroutine named \FnName{CoreSlow} that terminates in $O(D \cdot c)$ CONGEST rounds. We improve its round complexity to $O(D\log n + c)$ in the following section.

On a high level, the subroutine takes each part $P_i$ and tries to assign the $T$-ancestors of nodes in $P_i$ to its shortcut edges $H'_i$. However, this might lead to a large congestion on some edges. We address this issue by declaring an edge \textbf{unusable} if more than $2c$ different parts try to use it. This ensures the congestion is at most $2c$. We show the process provably leads to a constant fraction parts having small congestion and a small block parameter.

\textbf{Preliminaries:} As standard, assume we fix a spanning tree $T = (V, E_T)$ of depth $O(D)$ such that $G$ has a $T$-restricted shortcut with congestion $c$ and block parameter $b$. During the execution of the algorithm, some of the edges will be marked as \textbf{unusable}. Furthermore, we say that a tree edge $e \in E_T$ \textbf{can see} a node $v \in V$ if $v$ is in the subtree of $e$ and no edge on the unique path between the lower endpoint of $e$ and $v$ is unusable. Analogously, an edge can see a part $P_i$ if it can see any node in $P_i$.

\textbf{Outline of the \FnName{CoreSlow} subroutine:} Initially, no edge is unusable. We process the (tree) edges of $T$ in order of decreasing depth (bottom to top). An edge $e$ is assigned to all parts $P_i$ that $e$ can see. If an edge is assigned to more than $2c$ different parts, we mark this edge $e$ as \textbf{unusable} disallow $e$ from being used at all by any part.

\textbf{A detailed description of the \FnName{CoreSlow} subroutine:} Each node $v$ maintains a list $L_v$ of part IDs that $v$'s $T$-parent edge can see. The lists $L_v$ are initially empty. The subroutine runs in $\mathrm{depth}(T)$ phases where in phase $k$ each node $v$ at depth $\mathrm{depth}(T) - k$ updates $L_v$ simultaneously and sends the entire list $L_v$ to its ($v$'s) $T$-parent. Consider a node $v$ that receives $L_{v'}$ for all its $T$-children $v'$. We assign the union of all received lists and the singleton part ID of $v$ (if any) to $L_v$. If $|L_v| \le 2c$, we assign the parent edge of $v$ to all the parts in $L_v$ and transmit $L_v$ to its parent (potentially requiring $2c$ rounds). Otherwise, if $|L_v| > 2c$, we declare the parent edge as unusable.

A direct implementation of this would lead to a subroutine that takes $O(D \cdot c)$ rounds in the CONGEST model. Each of the $O(D)$ levels of $T$ must propagate at most $2c$ part IDs to their parent nodes. However, this bottleneck can be improved by random sampling, as we show in the next section with the subroutine \FnName{CoreFast}. \shortOnly{High-level pseudocode is given in \Cref{sec:pseudocodes}.}

\global\def\AlgCoreSlow{\begin{algorithm}[H]
  \caption{\FnName{CoreSlow}}
  \label{alg:coreslow}
  \begin{enumerate}
  \item At time $k$ each node $v$ at depth $depth(T) - k$ does the following in parallel:
    \begin{enumerate}
    \item if $v$ is an element of $P_i$, set $L_v \gets \{ i \}$, otherwise $L_v \gets \emptyset$
    \item receive all the part IDs from $v$'s children and assign their union to $L'$
    \item $L_v \gets L_v \cup L'$
    \item if $|L_v| > 2c$, mark $v$'s parent edge as unusable
    \item otherwise, (serially) send all the part IDs of $L_v$ up to $v$'s parent node
    \end{enumerate}
  \item For each node $v$:
    \begin{enumerate}
    \item if the parent edge $e$ of $v$ is marked as unusable, $e$ will not be assigned to any part
    \item otherwise, $e$ will be assigned to all $H_i, \forall i \in L_v$
    \end{enumerate}
  \end{enumerate}
\end{algorithm}}
\fullOnly{\AlgCoreSlow}

\begin{lemma}
  \label{lemma:coreslow}
\global\def\StateLemmaCoreSlow{
  Let $T$ be a spanning tree of depth $O(D)$ and assume there exists a $T$-restricted shortcut with congestion $c$ and block parameter $b$. The subroutine \FnName{CoreSlow} finds a $T$-restricted shortcut $\mathcal{H'} = ( H'_1, H'_2, ..., H'_N )$ with the following properties:
  \begin{enumerate}
  \item \label{lemma:coreslow:propcongest} The congestion of $\mathcal{H'}$ is at most $2c$.
  \item \label{lemma:coreslow:propblock} There exists a subset of parts $\mathcal{P}' \subseteq \mathcal{P}$ with size at least $|\mathcal{P'}| \ge \frac{|\mc{P}|}{2}$ such that each part in $\mc{P}'$ has at most $3b$ block components.
  \end{enumerate}
  The subroutine is deterministic and takes $O(D\cdot c)$ CONGEST rounds to execute. Upon completion, each node knows for each of its incident edges which parts are they assigned to in $\mathcal{H'}$.
}\StateLemmaCoreSlow
\end{lemma}
\global\def\ProofCoreSlow{\begin{proof}
  Let $\mathcal{H} = ( H_i )$ be any $T$-restricted shortcut with congestion $c$ and block parameter $b$ and let $\mathcal{H'} = ( H'_i )$ be the shortcut computed by \FnName{CoreSlow}. We call $\mathcal{H}$ the \textbf{canonical} shortcut and $\mathcal{H'}$ the \textbf{computed} shortcut.

  By construction, the congestion of $\mathcal{H'}$ is $2c$ as any edge that would be assigned to more than $2c$ parts is marked as unusable. Hence we proved property \ref{lemma:coreslow:propcongest}.

  Let $U \subseteq E_T$ be the set of unusable edges marked by the subroutine. In this paragraph, we find an upper bound for $|U|$. Consider \textbf{blaming} a part $P_i$ for congesting an unusable edge $e \in U$ when $e \not \in E_G[P_i] \cup H_i$ and $e$ can see $P_i$, i.e.,  edge $e$ was not in the canonical shortcut $H_i$, but $e$ was congested by part $P_i$ (and ultimately declared unusable). Each part can be blamed at most $b$ times because each block component can only be blamed for the first unusable edge in his $T$-tree path towards the $T$-root. Furthermore, if $e$ is unusable, it takes at least $2c - c$ different block components (from different parts) to be blamed for congesting $e$. Therefore $|U| \le N \frac{b}{c}$.

  We say that a part $P_i$ \textbf{missed} an edge $e$ when $e \in E_G[P_i] \cup H_i$ and $e \in U$ (consequently, $e \not \in H'_i$). Furthermore, call a part \textbf{bad} if it missed at least $2b$ edges and \textbf{good} otherwise. Note that if a part $P_i$ is good, the block parameter of $H'_i$ is at most $2b + \text{blockParameter}(\mathcal{H}) = 3b$. This is because each missed edge induces a new block component in $\mathcal{H'}$ (more precisely, we can identify each block component of $\mathcal{H'}$ with either a unique block component of $\mathcal{H}$ or a unique missed edge $e \in U$). Consequently, it is sufficient to prove that the subroutine finds at least $\frac{1}{2}N$ good parts.

  As any unusable edge is assigned to at most $c$ parts in the canonical shortcut, and for a part to be bad we need at least $2b$ edges to be missed, we have that the number of bad parts is at most $|U| \frac{c}{2b} \le \frac{1}{2}N$. Hence, the subroutine finds at least $\frac{1}{2}N$ good shortcuts, proving property \ref{lemma:coreslow:propblock}.

  The subroutine terminates in $O(D \cdot c)$ rounds: on each of the $O(D)$ levels of the tree $T$, all the nodes in parallel must send the part IDs trying to use its parent edge up the tree. A node can send up to $2c$ IDs, each requiring one round for its transmission.
\end{proof}}
\fullOnly{\ProofCoreSlow}
\shortOnly{\begin{proof}Deferred to \Cref{sec:construction-proofs}.\end{proof}}

\subsection{A faster $O(D\log n + c)$-round version of the core subroutine}

In this section, we describe a faster version of the core subroutine named \FnName{CoreFast}. On a high level, we lower the running time of \FnName{CoreSlow} by estimating the number of parts trying to use an edge by random sampling. In particular, each part becomes \textbf{active} with probability $p$ and we declare an edge unusable when $\Omega(c \cdot p)$ active parts try to use that edge.

\textbf{Preliminaries:} In addition to the preliminaries of \FnName{CoreSlow} we need shared randomness between all the nodes within a part. In other words, all the nodes of the same part must have access to the same seeds for a pseudorandom generator. This can be done by sharing $O(\log^2 n)$ random bits among all the nodes of $G$ in $O(D + \log n)$ rounds, as described in \cite{gh2016lowcongestion}.

\textbf{Outline of the \FnName{CoreFast} subroutine:} Each part becomes \textbf{active} with probability $p = \frac{\gamma \log n}{2c}$ where $\gamma > 0$ is sufficiently large constant. We basically follow the \FnName{CoreSlow} subroutine, but instead of propagating all $O(c)$ part IDs of $L_v$, we propagate only the active ones. An edge is declared \textbf{unusable} if at least $4c \cdot p = \Omega(\log n)$ (active) part IDs want to use it. Hence, by a standard Chernoff bound argument we can claim with high probability that (i) we never propagate more than $O(\log n)$ part IDs through an edge, (ii) each unusable edge has at least $2c$ part IDs trying to use that edge, and (iii) each usable (non-congested) edge has at most $8c$ part IDs. After determining which edges are unusable in $O(D\log n)$ rounds, \FnName{CoreFast} must nevertheless find the complete set of part IDs that can use each edge. This is a tree routing problem where each message (part ID) has to be routed up the tree $T$ until the first unusable edge. No message needs to travel more than $D$ edges and no edge needs to transmit more than $8c$ different part IDs \whp Hence this routing can be done in $O(D + c)$ using \Cref{lemma:routing-on-trees}.

\textbf{A detailed description of the \FnName{CoreFast} subroutine:} Due to shared randomness, each part independently becomes \textbf{active} with probability $p = \frac{\gamma \log n}{2c}$ (all the nodes within the part agree on this label). Similarly, as in \FnName{CoreSlow}, each node $v$ maintains a list $\tilde{L}_v$ of active part IDs that its ($T$) parent edge can see. The lists $\tilde{L}_v$ are initially empty. The subroutine runs in $\mathrm{depth}(T)$ phases where in phase $k$ all the nodes at depth $\mathrm{depth}(T) - k$ try to update $\tilde{L}_v$ in parallel and send $\tilde{L}_v$ to its $T$-parent. Consider a node $v$ that receives $L_{v'}$ for all its $T$-children $v'$. We assign the union of all received lists and the singleton part ID of $v$ (if any) to $L_v$. If $|L_v| \le 4c \cdot p$, we assign the parent edge of $v$ to all the parts in $L_v$ and transmit $L_v$ to its parent (potentially requiring $O(\log n)$ rounds). Otherwise, if $|L_v| > 4 \cdot p$, we declare the parent edge as unusable. This finalizes the first part of the subroutine where we determine all unusable edges. It remains to forward the complete set of part IDs (and not just the sampled ones) that can use some edge $e$ to the endpoints of $e$. This is a classic tree routing problem where no route has its length larger than $D$ and no edge intersects more than $8c$ paths \whp \Cref{lemma:routing-on-trees} provides a method to route all part IDs in at most $O(D + c)$ rounds. Note that any two part IDs whose routes share an edge have the same endpoint (lowest unusable ancestor edge), so any routing priority between the messages gives the aforementioned $O(D + c)$ bound \whp \shortOnly{High-level pseudocode is given in \Cref{sec:pseudocodes}.}

\global\def\AlgCoreFast{\begin{algorithm}[H]
  \caption{\FnName{CoreFast}}
  \label{alg:corefast}
  \begin{enumerate}
  \item Each part becomes active with probability $p = \frac{\gamma \log n}{2c}$
  \item At time $k$ each node $v$ at depth $depth(T) - k$ does the following in parallel:
    \begin{enumerate}
    \item if $v$ is an element of $P_i$ and $P_i$ is active, set $\tilde{L}_v \gets \{ i \}$, otherwise $\tilde{L}_v \gets \emptyset$

    \item receive all the active part IDs from $v$'s children and assign their union to $L'$
    \item $\tilde{L}_v \gets \tilde{L}_v \cup L'$
    \item if $|\tilde{L}_v| \ge 4c \cdot p$, mark $v$'s parent edge as unusable

    \item otherwise send all the part IDs $\tilde{L}_v$ up to $v$'s parent node
    \end{enumerate}

  \item Each node $v$ initializes $Q_v$ with its part ID (or $\emptyset$ if not in any part)
  \item Each node $v$ does the following in parallel:
    \begin{enumerate}
      \item add all received IDs to the $Q_v$
      \item if the parent edge of $v$ is not unusable and $\exists i \in Q_v$ that was never forwarded
        \begin{enumerate}
          \item forward minimum such $i$ along the parent edge
        \end{enumerate}
      \end{enumerate}
    \item Each part ID in $Q_v$ can use the parent edge of $v$ unless it is unusable
    \end{enumerate}
  \end{algorithm}}
\fullOnly{\AlgCoreFast}

\begin{lemma*}[Restated \Cref{lemma:corefast}]
  \StateLemmaCore
\end{lemma*}
\global\def\ProofLemmaCoreFast{\begin{proof}
  This proof extensively utilizes methods used in the proof of \Cref{lemma:coreslow}. For completeness, we redefine all of the used terminologies and reprove all of the intermediate results.

  Let $\mathcal{H} = ( H_i )$ be any $T$-restricted shortcut with congestion $c$ and block parameter $b$ and let $\mathcal{H'} = ( H'_i )$ be the shortcut computed by \FnName{CoreFast}. We call $\mathcal{H}$ the \textbf{canonical} shortcut and $\mathcal{H'}$ the \textbf{computed} shortcut.

  \rev{Consider any tree edge. Suppose that the edge can see $t$ different part IDs. Denote by $X_1, \ldots, X_t$ whether those $t$ parts are active (in which case $X_i = 1$, otherwise $X_i = 0$). Let $S := X_1 + X_2 + \ldots + X_t$. Due to sampling, we have that the expectation $\mathbb{E}[S] = p t$. Since $X_i \in \{0, 1\}$ and they are independent we can apply a standard Chernoff bound argument giving us that $\Pr[X_1 + \ldots + X_t \le \frac{1}{2} \mathbb{E}[S] ] \le \exp(-\delta \mathbb{E}[S])$ for some constant $\delta > 0$. Suppose now that $t \ge 8c$, we have that $\Pr[X_1 + \ldots + X_t \le 4c \cdot p] \le \exp(-\delta 8 p c) = \exp(-\delta 4 \gamma \log) = n^{-\gamma'}$ for a sufficiently large constant $\gamma' > 0$ (since we choose $\gamma > 0$ sufficiently large). We conclude that if $t \ge 8c$, the considered edge will become unusable with high probability. Since there are only a polynomial number of different edges, we can use a union bound to conclude that the congestion of $\mathcal{H'}$ is $8c$ (for all edges) with high probability (since the probability of this being violated is at most $n \cdot n^{-\gamma'} = n^{-\gamma + 1}$, i.e., with high probability).}


  Let $U \subseteq E_T$ be the set of unusable edges marked by the subroutine. In this paragraph, we find an upper bound for $|U|$. Consider \textbf{blaming} a part $P_i$ for congesting an unusable edge $e \in U$ when $e \not \in E_G[P_i] \cup H_i$ and $e$ can see $P_i$, i.e., edge $e$ was not in the canonical shortcut $H_i$, but $e$ was congested by part $P_i$ (and ultimately declared unusable). We argue via a Chernoff bound that each unusable edge $e \in U$ can see at least $2c$ parts.

  \rev{The bound is argued in a completely analogous way as proving the congestion being at most $8c$, except the Chernoff bound we use here is the following one. As before, let $S := X_1 + X_2 + \ldots X_t$ be the sum of indicator variables of the part IDs that can see an edge the fixed edge $e$. Our bound stipulates that for independent $\{0,1\}$ variables $X_i$ we have that $\Pr[X_1 + X_2 + \ldots + X_t \le 2 \mathbb{E}[S]] \le \exp(- \delta \mathbb{E}[S])$ for some $\delta > 0$. Using it, we conclude that if $t \le 2c$ parts can see $e \in U$, then $\Pr[S \ge 2\mathbb{E}[S]] = \Pr[S \ge 4c \cdot p] \le n^{-\gamma'}$ for some sufficiently large $\gamma' > 0$, giving us that in such a case the would not be declared unusable with high probability. Union bounding, we get the same holds for each $e \in U$.}
  
  Since each unusable edge $e \in U$ can see at least $2c$ parts, we blame at least $2c - \mathrm{congestion}(\mathcal{H}) = c$ parts for congesting $e$. Each part can be blamed at most $b$ times because each block component can only be blamed for the first unusable edge in his $T$-tree path towards the $T$-root. Furthermore, if $e$ is unusable, it takes at least $2c - c$ different block components (from different parts) to be blamed for congesting $e$. Therefore $|U| \le N \frac{b}{c}$.

  We say that a part $P_i$ \textbf{missed} an edge $e$ when $e \in E_G[P_i] \cup H_i$ and $e \in U$ (consequently $e \not \in H'_i$). Furthermore, call a part \textbf{bad} if it missed at least $2b$ edges and \textbf{good} otherwise. Note that if a part $P_i$ is good, the block parameter of $H'_i$ is at most $2b + \text{blockParameter}(\mathcal{H}) = 3b$. This is because each missed edge induces a new block component in $\mathcal{H'}$ (more precisely, we can identify each block component of $\mathcal{H'}$ by either a unique block component of $\mathcal{H}$ or a unique missed edge $e \in U$). Consequently, it is sufficient to prove that the subroutine finds at least $\frac{1}{2}N$ good parts.

  As any unusable edge is assigned to at most $c$ parts in the canonical shortcut and for a part to be bad we need at least $2b$ edges to be missed, we have that the number of bad parts is at most $|U| \frac{c}{2b} \le \frac{1}{2}N$. Hence, the subroutine finds at least $\frac{1}{2}N$ good shortcuts.

  The subroutine takes $O(D\log n + c)$ rounds: on each of the $O(D)$ levels of the tree $T$, all the nodes in parallel must send the active part IDs that its parent edge can see. If an edge $e$ is not unusable, we argued via a Chernoff bound that at most $O(c \cdot p) = O(\log n)$ active part IDs can be seen from $e$, hence the number of rounds for determining unusable edges is $O(D\log n)$, w.h.p. Finally, propagating the part IDs upwards along $T$ described in \Cref{lemma:routing-on-trees} takes $O(D + c)$ rounds, bringing the total number of rounds to $O(D\log n + c)$.
\end{proof}}
\fullOnly{\ProofLemmaCoreFast}
\shortOnly{\begin{proof}Deferred to \Cref{sec:construction-proofs}.\end{proof}}

\bibliographystyle{alpha}
\bibliography{shortcuts}

\appendix

\end{document}